\documentclass[11pt]{article}

\usepackage[a4paper,margin=1in]{geometry}
\usepackage{amsmath,amssymb,amsthm,mathtools}
\usepackage{microtype}
\usepackage{booktabs}
\usepackage{array}
\usepackage{enumitem}
\usepackage{algorithm}
\usepackage{algpseudocode}
\usepackage{xcolor}
\usepackage{hyperref}
\usepackage{cleveref}

\hypersetup{
  pdftitle={A Simple Factor-5 Conversion to Internal Collage Systems},
  pdfauthor={Simone Faro},
  colorlinks=true,
  linkcolor=blue!55!black,
  citecolor=blue!55!black,
  urlcolor=blue!55!black
}

\newtheorem{theorem}{Theorem}
\newtheorem{lemma}[theorem]{Lemma}
\newtheorem{corollary}[theorem]{Corollary}
\newtheorem{remark}[theorem]{Remark}
\theoremstyle{definition}
\newtheorem{definition}[theorem]{Definition}
\newtheorem{example}[theorem]{Example}

\newcommand{\C}{\mathcal{C}}
\newcommand{\Cp}{\mathcal{C}'}
\newcommand{\val}[1]{\langle #1\rangle}

\title{A Factor-5 Conversion to Internal Collage Systems}
\author{Simone Faro\\
Department of Mathematics and Computer Science, University of Catania\\
Viale Andrea Doria 6, 95125 Catania, Italy\\
\texttt{faro@dmi.unict.it}\\
\href{https://orcid.org/0000-0001-5937-5796}{ORCID: 0000-0001-5937-5796}}
\date{}

\begin{document}
\maketitle

\begin{abstract}
A collage system is a grammar-based compression model that extends straight-line programs with repetition and substring truncation. Internal collage systems additionally require every nonterminal to be structurally reachable from the start symbol. Migita, Uehata, and I (CPM 2026) showed that any collage system of size $m$ can be converted into an internal one generating the same string with size at most $9m$, and left the improvement of this constant as an open problem. We show that a simple refinement of their top-down conversion reduces the bound to $5m$. The proof combines three elementary ideas: canonicalization of generated truncations that remain aligned with a target endpoint; a repetition decomposition that absorbs every complete copy of the repetition base into a maximal core; and monotonicity of structural reachability, which prevents an input truncation rule from both becoming structurally reachable and later acting as a hidden target at which endpoint alignment is lost. The conversion runs in deterministic $O(m^2)$ worst-case time in the stated unit-cost random-access machine model. Consequently, for every string, the minimum size of an internal collage system is at most five times the minimum size of a general collage system.
\end{abstract}

\section{Introduction}

Grammar-based compression describes a string by a compact acyclic collection of production rules rather than by listing its characters explicitly. Straight-line programs (SLPs) are the most classical example: every nonterminal produces either one character or the concatenation of two previously defined nonterminals, and the start symbol derives exactly one string. The minimum grammar size is therefore both a compression objective and a measure of structural repetitiveness~\cite{Navarro2021Survey}; for a survey on algorithms over SLP-compressed strings, see Lohrey~\cite{Lohrey2012}.

Collage systems, introduced by Kida et al.~\cite{KidaEtAl2003}, enrich this grammar model with repetition and truncation. In the original formulation, substring extraction is implemented through prefix and suffix truncations. Following the internal-collage model of Navarro, Ochoa, and Prezza~\cite{NavarroOchoaPrezza2021} and the general formulation adopted by Migita, Uehata, and I~\cite{MigitaEtAl2026}, throughout this paper we treat one arbitrary substring truncation as a single production. Repetition and truncation can represent repetitive strings more flexibly than ordinary SLPs and have made the minimum collage-system size $c(T)$ a natural repetitiveness measure in the study of dictionary compression~\cite{Navarro2021Survey,NavarroOchoaPrezza2021}.

Truncation, however, creates a structural phenomenon that does not arise in ordinary SLPs. A reachable nonterminal may use only an internal substring of another nonterminal $X$. The value of $X$ is then needed to define the text, even though $X$ itself need never appear as an exposed component in the expansion of the start symbol. Such a target is hidden behind a truncation edge.

Internal collage systems, introduced by Navarro, Ochoa, and Prezza~\cite{NavarroOchoaPrezza2021}, exclude this behavior. Informally, a collage system is internal if every nonterminal is structurally reachable from the start symbol without first entering it through a truncation edge. Let $\widehat c(T)$ denote the minimum size of an internal collage system generating $T$. Since internal systems form a subclass of general collage systems, $c(T)\le \widehat c(T)$ is immediate. The nontrivial question is how large the cost of enforcing internality can be.

Migita, Uehata, and I recently settled the asymptotic question~\cite{MigitaEtAl2026}. They showed that any collage system of size $m$ can be transformed in $O(m^2)$ time into an internal collage system generating the same string and having size at most $9m$. Their proof processes hidden original nonterminals in top-down order. A truncation referring to the current hidden target is either redirected to a smaller target or decomposed into a small number of pieces. The difficult point is that a Case-3 decomposition may create new truncation rules that themselves refer to hidden targets and therefore generate future work. Bounding this recursively generated work leads to the factor $9$. Migita et al. explicitly list improving the ``probably loose'' constant $9$ among the main open directions left by their result~\cite{MigitaEtAl2026}.

The main result of this note is that every collage system of size $m$ can be converted into an internal collage system generating the same string with size at most $5m$. Equivalently,
$\widehat c(T)\le5c(T)$.
The conversion runs in deterministic $O(m^2)$ worst-case time in the stated unit-cost RAM model.

The proof keeps the same top-down strategy but makes generated truncations more disciplined. As long as a generated truncation continues to represent a prefix or suffix of its current target, we regard it as \emph{clean}. Clean prefixes of the same target are nested, and so are clean suffixes. Hence it is enough to keep one longest live prefix representative and one longest live suffix representative; all shorter representatives, as well as ties at maximum length, can be redirected internally through the chosen one. This simple canonicalization ensures that only a constant number of clean lineages can lose endpoint alignment at any one target.

A second refinement concerns repetitions. When a requested substring crosses several copies of the repetition base, every complete copy is absorbed into a maximal central core, leaving only proper boundary pieces. For a clean generated lineage, one boundary is absent, which reduces the local helper cost.

The final ingredient is a distinction between two roles played by original truncation rules. Some become structurally reachable and act as active referrers. Others remain hidden and may later act as \emph{breakers}, namely original truncation targets through which a clean generated lineage loses endpoint alignment. Structural reachability is monotone during the conversion, so the same original truncation cannot play both roles. This disjointness provides the extra saving needed to reach factor $5$.

The resulting proof is intentionally elementary. It uses local helper bounds, canonical endpoint representatives, monotone reachability, and guaranteed deletions of hidden original targets. No potential function, frontier DAG, or global conservation argument is required. The complete counting argument, including the execution-dependent counters, is deferred to Section~\ref{sec:counting}.

As in Migita et al.~\cite{MigitaEtAl2026}, the running time is quadratic in the grammar size. We state the bound in a unit-cost RAM model in which all repetition exponents, substring endpoints, rule indices, and represented lengths manipulated by the conversion fit in $O(1)$ machine words. We do not identify this word size with $\Theta(\log |T|)$, since intermediate represented lengths may be much larger than the length of the final text.

Throughout the paper we use the model in which an arbitrary substring truncation $X\to Y[b..e)$ is one production, as in Navarro, Ochoa, and Prezza~\cite{NavarroOchoaPrezza2021} and Migita, Uehata, and I~\cite{MigitaEtAl2026}. In the original formulation of Kida et al.~\cite{KidaEtAl2003}, substring extraction is instead implemented through prefix and suffix truncations; Migita et al. explicitly note this distinction~\cite{MigitaEtAl2026}. These formulations are expressively equivalent for generating strings, but that fact alone does not imply rule-for-rule equivalence, preservation of internality, or preservation of the same constant-factor bounds.

The rest of the paper is organized as follows. Section~\ref{sec:prelim} fixes the model and internality convention. Section~\ref{sec:previous} recalls the factor-$9$ conversion. Section~\ref{sec:conversion} presents the refinements used in the factor-$5$ argument. Section~\ref{sec:counting} gives the global counting proof, and Section~\ref{sec:time} establishes deterministic $O(m^2)$ running time.

\clearpage
\section{Preliminaries}
\label{sec:prelim}

Let $\Sigma$ be a finite alphabet and let $T=T[1]\cdots T[n]$ be a string over $\Sigma$. We use one-based half-open substring notation $T[b..e)=T[b]T[b+1]\cdots T[e-1]$, with $1\le b<e\le n+1$.

A \emph{collage system} is an acyclic collection of nonterminals, each with exactly one production of one of the following forms:
\[
\begin{array}{rcll}
X&\to&a, & a\in\Sigma,\\
X&\to&YZ, & \text{concatenation},\\
X&\to&Y^r,\quad r>2, & \text{repetition},\\
X&\to&Y[b..e), & \text{substring truncation}.
\end{array}
\]
The value derived by $X$ is denoted by $\val{X}$. The system has one distinguished start symbol whose value is the represented text. Its size is the number of nonterminals. We assume that the dependency relation is acyclic and fix a topological order in which every right-hand-side nonterminal precedes the nonterminal that refers to it.

The restriction $r>2$ is conventional: a double repetition is represented by a concatenation $YY$. Following the model used by Migita et al.~\cite{MigitaEtAl2026}, one arbitrary substring truncation counts as one rule. This point matters for constant-factor statements.

\begin{definition}[Structural reachability]
Starting from the start symbol, atomic, concatenation, and repetition rules expose the nonterminals they reference. A truncation $Q\to X[b..e)$ uses the substring $\val{X}[b..e)$ but does not by itself expose $X$ as a structurally reachable nonterminal. A nonterminal is \emph{reachable} if it is exposed from the start symbol through this structural expansion; otherwise it is \emph{unreachable}.
\end{definition}

A collage system is \emph{internal} if every nonterminal is reachable. This is equivalent to the grammar-tree definition used in~\cite{NavarroOchoaPrezza2021,MigitaEtAl2026}. We write $c(T)=\min\{|\C|:\C\text{ generates }T\}$ and $\widehat c(T)=\min\{|\C|:\C\text{ is internal and generates }T\}$.

\begin{example}
Consider $A\to a$, $B\to b$, $X\to AB$, and $S\to X[1..2)$. The represented text is $a$. The start symbol is reachable, but the truncation edge does not expose $X$, so $X$, $A$, and $B$ are initially unreachable. Internalization rewrites the start rule to $S\to a$ and deletes the now-unused original nonterminals.
\end{example}

The conversion repeatedly rewrites truncation rules whose left-hand side belongs to the structurally reachable part of the current system while their original target is still hidden.

\begin{definition}[Live truncation]
A current truncation rule
\[
   Q\to X[b..e)
\]
is \emph{live} if and only if $Q$ is structurally reachable, $X$ is an original nonterminal, and $X$ is structurally unreachable.
\end{definition}

It is useful to follow a live rule through repeated Case-1 redirects.

\begin{definition}[Lineage]
An \emph{original truncation lineage} starts from a truncation rule present in the input when that rule becomes live. A \emph{generated lineage} starts from one of the at most two live boundary truncations created by a Case-3 decomposition. Case-1 redirects preserve the identity of the lineage. A lineage terminates when it is resolved by a terminal case or, more generally, as soon as its current truncation ceases to be live. In particular, termination occurs when the target becomes reachable and the truncation becomes internal, or when canonicalization redirects the lineage to a structurally reachable representative.
\end{definition}

\section{The previous factor-$9$ conversion}
\label{sec:previous}

We briefly recall the conversion of Migita, Uehata, and I~\cite{MigitaEtAl2026}. Original unreachable nonterminals are processed in decreasing topological order. When an unreachable target $X$ is processed, every surviving reference to $X$ from the reachable part of the system is a truncation rule
\[
   Q\to X[b..e).
\]
The rewrite depends on the original production of $X$ and on the position of the requested interval.

\begin{description}[leftmargin=2.2em,style=nextline]
\item[Case 0.] $X$ is atomic. The truncation is replaced by the corresponding atomic rule.
\item[Case 1.] The requested interval lies inside one smaller component, or $X$ itself is defined by truncation. The truncation is redirected to a smaller original target. No helper nonterminal is created.
\item[Case 2.] $X\to Y^r$ and the requested interval spans more than two copies of $Y$. The interval is written as a suffix of $Y$, followed by a block of whole copies of $Y$, followed by a prefix of $Y$. At most four new nonterminals are needed.
\item[Case 3.] The requested interval crosses exactly one component boundary of a concatenation or two consecutive copies of a repetition. All crossing referrers of the same target are processed together. A longest required suffix and a longest required prefix are shared, and the size increases by at most two per participating referrer.
\end{description}

The crucial observation in Case 3 is that only the shared longest suffix and longest prefix can remain live and point to unreachable targets~\cite{MigitaEtAl2026}. Therefore each original concatenation or repetition target creates at most two new live truncation lineages.

If the input has size $m$ and contains $a$ original truncation rules, the analysis of~\cite{MigitaEtAl2026} counts at most $2(m-a)$ generated truncation rules and charges the generic Case-2 cost $4$ to every original or generated truncation lineage. This gives $m+4(a+2(m-a))=9m-4a\le9m$. The present note keeps the same four-case decomposition. The only change is that generated truncations are treated more carefully before they are charged the generic Case-2 cost.

\section{A refined top-down conversion}
\label{sec:conversion}

\subsection{Clean and tainted generated lineages}

A generated live truncation is born in Case 3 as either a prefix or a suffix of its current target. As long as this endpoint structure is preserved, the lineage is cheaper to process than an arbitrary truncation.

\begin{definition}[Clean and tainted]
A generated live lineage is \emph{clean} if, continuously since its creation, it has represented a prefix or a suffix of every current target along the lineage. It becomes \emph{tainted} at the first Case-1 redirect that destroys this endpoint alignment. Taint is permanent.
\end{definition}

A Case-1 redirect through a concatenation or repetition target preserves endpoint alignment whenever the lineage remains inside one smaller component. Only composition through an original truncation rule can move a clean endpoint into the interior of a new target.

\begin{lemma}[Taint origin]
A generated clean lineage can become tainted only when Case 1 redirects it through an original truncation target.
\end{lemma}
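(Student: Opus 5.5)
The plan is to argue by exhaustion over the operations that can modify a live generated truncation $Q\to X[b..e)$ while it is clean. Say it is a clean prefix, so $b=1$; the suffix case is symmetric, with $e=|\val{X}|+1$. By the definition of lineage, the only operation that keeps the lineage alive while changing its target is a Case-1 redirect. Cases 0, 2 and 3 resolve or decompose the rule. Canonicalization either leaves the rule untouched or redirects it to a reachable representative, and either event terminates the lineage rather than tainting it. So it is enough to show that every Case-1 redirect whose target $X$ is defined by concatenation or repetition preserves endpoint alignment. Then the only remaining Case-1 subcase, $X$ defined by an original truncation, is the only place taint can arise.

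First take $X\to YZ$. Case 1 applies only when $[b..e)$ lies inside one component. If it lies in $Y$, the new rule is $Y[b..e)$ with $b=1$, which is still a prefix. If it lies in $Z$, then $b>|\val{Y}|$. A prefix has $b=1\le|\val{Y}|$, so this is impossible for a prefix. For a suffix, the new rule is $Z[b-|\val{Y}|..e-|\val{Y}|)$, and its right endpoint is $|\val{X}|-|\val{Y}|+1=|\val{Z}|+1$, so it is still a suffix.

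Next take $X\to Y^r$. Case 1 applies only when the interval lies inside a single copy of $Y$, and the redirect is to $Y[b-k|\val{Y}|..e-k|\val{Y}|)$ for the appropriate $k$. A prefix necessarily lies in the first copy, giving $k=0$, so it stays a prefix. A suffix necessarily lies in the last copy, so the shifted right endpoint is $|\val{Y}|+1$ and it stays a suffix.

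The remaining subcase is $X\to W[b'..e')$, which is an original truncation rule, because $X$ is original and live rules point only to original targets. Composing gives $W[b'+b-1..b'+e-1)$, which is generally interior to $W$. This is exactly the origin named in the statement.

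The main obstacle is checking that no other mechanism can alter a clean lineage's target. In particular I must confirm that canonicalization and the Case-3 processing of other referrers never rewrite an existing generated rule to a new unreachable target in a non-aligned way. I would handle this by appealing to the definition of lineage. Canonicalization either redirects a lineage to a structurally reachable representative, which terminates it, or keeps the chosen longest representative unchanged. Shared prefix and suffix helpers created in Case 3 begin new lineages, and these are clean at birth by construction.
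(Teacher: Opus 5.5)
Your proof is correct and uses essentially the same case analysis as the paper. A Case-1 redirect through a concatenation or repetition target keeps a clean prefix in the first component or copy and a clean suffix in the last one, so only composition through an original truncation rule can move an endpoint into the interior; your extra sweep over canonicalization and Cases 0, 2 and 3 is not needed, since taint is by definition incurred only at a Case-1 redirect, but it does no harm.
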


\begin{proof}
Let the current clean interval be a prefix or suffix of $X$. If $X\to YZ$, a clean prefix that lies in one component must lie in $Y$ and remains a prefix of $Y$; the suffix case is symmetric. If $X\to Y^r$, a clean prefix contained in one copy must lie in the first copy and remains a prefix of $Y$, while a clean suffix lies in the last copy. By contrast, if $X\to Y[b..e)$, a prefix of $X$ can be mapped to an interval $Y[b..c)$ with $b>1$, and similarly for a suffix. Thus only a redirect through an original truncation target can destroy alignment.
\end{proof}

All clean prefixes pointing to the same target are nested, and so are all clean suffixes. This gives a simple canonicalization.

\begin{lemma}[Canonical clean representatives]
\label{lem:canonical}
Fix an unreachable original target $X$. Among all live generated clean prefixes of $X$, choose one representative of maximum length and keep only that one live; every other clean prefix, including another prefix of the same maximum length, can be rewritten as an internal truncation of the chosen representative. The same holds for suffixes. If one or more full-length clean representatives exist, choose one of them and keep only that representative live. These rewrites preserve acyclicity.
\end{lemma}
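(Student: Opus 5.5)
The plan is to verify three things directly from the definitions: the nesting claim, the correctness of the rewrite, and acyclicity.

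\emph{Nesting.} Let $Q_1\to X[1..\ell_1)$ and $Q_2\to X[1..\ell_2)$ be two live clean prefixes of $X$ with $\ell_1\le\ell_2$. Then $\val{Q_1}=\val{X}[1..\ell_1)$ is a prefix of $\val{Q_2}=\val{X}[1..\ell_2)$. In particular, $\val{Q_1}=\val{Q_2}[1..\ell_1)$.

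\emph{The rewrite.} Choose a representative $R$ of maximum length $\ell^\ast$. Ties are broken arbitrarily, say by smallest index. If some clean representative spans all of $X$, choose one of those. For every other live clean prefix $Q\to X[1..\ell)$, replace its production by
\[
Q\to R[1..\ell).
\]
This is well defined because $\ell\le\ell^\ast$, and it preserves $\val{Q}$. For a tie $\ell=\ell^\ast$ the rewrite is the full truncation $R[1..\ell^\ast)$, which is still a legal single truncation rule and has the same size. Suffixes are handled symmetrically: with $|\val{X}|=L$ and suffix $X[s..L{+}1)$, use $R[s-s^\ast+1..\,L-s^\ast+2)$, where $R\to X[s^\ast..L{+}1)$ is a longest suffix. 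Sizes are unchanged, since we only rewrite right-hand sides. The new rule is internal rather than live for the following reason. $R$ is live, so by definition of a live truncation $R$ is structurally reachable. Hence the target $R$ of $Q$'s truncation is reachable, so $Q$'s rule is no longer live and its lineage terminates, as allowed by the definition of lineage. Reachability of other nonterminals is unaffected, because a truncation edge exposes nothing: $X$ is not exposed by $Q$ before the rewrite, and $R$ is already exposed. The represented text is unchanged because every value is preserved.

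\emph{Acyclicity.} This is the step I expect to need the most care. The new edge $Q\to R$ could close a cycle if $R$ depends, directly or indirectly, on $Q$. I would argue as follows. $R$'s only outgoing edge is to $X$, an original unreachable nonterminal. Every nonterminal reachable from $X$ in the dependency graph is either original or a helper created while processing targets below $X$. Helpers, however, are only created for targets processed earlier in decreasing topological order, that is, targets above $X$, and they lie in the reachable part. More simply, I would show that $X$ and everything it depends on avoid every current referrer $Q$ of $X$: if $X$ depended on $Q$, then since $Q$ depends on $X$ this would already be a cycle in the current system. Therefore any cycle through the new edge $Q\to R\to X\to\cdots\to Q$ would require $X$ to reach $Q$. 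That would give the existing cycle $Q\to X\to\cdots\to Q$, contradicting acyclicity of the current system, which holds by induction over the conversion. Also $R\neq Q$ by choice. If one wants an explicit topological order, place all representatives immediately after $X$ and the redirected referrers after their representative. This is consistent because each $Q$ previously depended only on $X$.

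Finally, apply the procedure to prefixes and suffixes independently. If a single live clean rule is simultaneously the full-length prefix and the full-length suffix of $X$, it serves as both representatives, so at most two and possibly one clean lineage per target stay live.
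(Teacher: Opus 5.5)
Your proposal is correct and follows essentially the same route as the paper. You rewrite every non-chosen clean prefix or suffix as a truncation of the reachable maximum-length (or full-length) representative, which makes it internal. You then prove acyclicity exactly as the paper does: a cycle through the new edge $Q\to R$ would force a path $X\leadsto Q$, and together with the still-present edge $Q\to X$ this is already a cycle.

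The abandoned aside about where helpers can lie below $X$ is unnecessary. Your explicit topological-order construction is a valid but optional extra check.
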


\begin{proof}
Let $Q^\star$ be a chosen maximum-length clean prefix of $X$, and let $Q\neq Q^\star$ be any other clean prefix. Since clean prefixes of the same target are nested, $|\val{Q}|\le|\val{Q^\star}|$. Before canonicalization, both current productions are truncations of the same original target $X$: in particular, the dependency graph contains the edges $Q^\star\to X$ and $Q\to X$. Rewrite
\[
   Q\to Q^\star[1..|\val{Q}|+1).
\]
If $|\val{Q}|=|\val{Q^\star}|$, this is simply a full-length truncation of $Q^\star$. Because $Q^\star$ is structurally reachable, the new truncation is internal and the lineage of $Q$ terminates.

It remains to check acyclicity. The current collage system is acyclic before the rewrite. If adding the new dependency $Q\to Q^\star$ created a cycle, then before the rewrite there would already have been a directed path $Q^\star\leadsto Q$. Since the current production of $Q^\star$ is a truncation of $X$, every directed path starting at $Q^\star$ begins with the edge $Q^\star\to X$; hence such a path would imply $X\leadsto Q$. But the old edge $Q\to X$ is still present before the rewrite, so $Q\to X\leadsto Q$ would already be a directed cycle, contradicting acyclicity. Therefore no path $Q^\star\leadsto Q$ exists and the canonicalization rewrite preserves acyclicity.

The suffix case is identical. If clean full-length representatives exist, every such representative is simultaneously a full-length prefix and suffix of $X$. Choose one, say $F$. Every other clean prefix or suffix $Q$ can be rewritten as the corresponding truncation of $F$; in particular, another full-length representative is rewritten as a full-length truncation of $F$. The same acyclicity argument applies because, before the rewrite, both $F$ and $Q$ have current truncation edges to $X$. Thus after canonicalization there is literally at most one live clean prefix and at most one live clean suffix, or a single live full-length representative. No new nonterminal is introduced.
\end{proof}

The top-down order has a second consequence that will be used both for canonicalization and for the bucket invariant below. Fix the original topological order so that every original nonterminal occurring on the right-hand side of an original production is strictly smaller than its left-hand side.

\begin{lemma}[Monotone target descent and no late arrivals]
\label{lem:no-late-arrivals}
Whenever the current target of a live lineage changes, its new original target is strictly smaller in the fixed original topological order. Moreover, every new live lineage created while an original target $X$ is being processed has an original target strictly smaller than $X$. Consequently, once the processing of $X$ begins, no operation performed at $X$ or at any later target can create or redirect a live lineage whose current target is again $X$.
\end{lemma}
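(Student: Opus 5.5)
The plan is to enumerate every operation that can change the current target of a live lineage or create a new live lineage, and to check each one against the fixed original topological order. While $X$ is processed, the only operations performed are the four cases of Section~\ref{sec:previous}, together with the canonicalization of Lemma~\ref{lem:canonical}.

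First, target changes. A lineage changes its current target only through a Case-1 redirect, since canonicalization and the terminal cases end lineages rather than redirecting them. In Case 1 the current target $X$ is an original nonterminal with a concatenation, repetition, or truncation production. The redirect replaces $Q\to X[b..e)$ by a truncation of a nonterminal that occurs on the right-hand side of the original production of $X$: a component $Y$ or $Z$, the base $Y$ of a repetition, or the target $Y$ of $X\to Y[b'..e')$, with shifted endpoints. By the choice of topological order, every such right-hand-side original nonterminal is strictly smaller than $X$. One point needs care: the production of $X$ consulted must be its original production, not a rewritten one. The conversion never rewrites unreachable original targets before processing them; it only rewrites truncation rules whose left-hand sides are reachable. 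So the relevant production of $X$ is unchanged.

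Second, new live lineages. These arise only from the boundary truncations created in Case 2 and Case 3 at $X$. Each is a truncation of a component of the original production of $X$ (a suffix of $Y$, a prefix of $Z$, or a prefix or suffix of the base), so again the target is strictly smaller than $X$. The helper nonterminals themselves (cores, block repetitions, concatenations) are not original nonterminals, so by definition a truncation pointing to one is never live. Canonicalization creates no new nonterminals. It only points a lineage at a reachable representative, and that ends the lineage.

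Third, the consequence follows by induction over the processing sequence, which visits targets in decreasing order. Every live lineage created or redirected while processing a target $X'$ ends up with a target strictly below $X'$. For each later target $X'$ we have $X'<X$, so its target is below $X$. For $X'=X$, the target is below $X$ by the first two steps. Hence no operation at or after $X$ produces a live lineage whose current target is $X$.

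The main obstacle is the bookkeeping in the first step. I must confirm that every case rewrites only with respect to the original production of the target, even when $X$ is itself defined by truncation and the redirect composes intervals. I also must confirm that no step pointing at a generated helper can later be confused with a live reference.
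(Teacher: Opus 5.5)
Your treatment of target changes matches the paper: a Case-1 composition goes through the still-original production of the current target, and the final induction follows the decreasing scan. Your remark that an unreachable target's production has never been rewritten is a useful point the paper leaves implicit.

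The gap is in your second step. The claim that new live lineages arise only from the boundary truncations of Cases~2 and~3 is false for Algorithm~\ref{alg:conversion}. Your opening list of operations performed at $X$ already omits the procedure \textsc{Activate}. After the first Case-2 rewrite at a repetition target $X\to Y^r$, \textsc{Activate}$(Y)$ propagates structural reachability through original concatenation and repetition edges. It then registers every newly reachable original truncation left-hand side $Q\to Z[b..e)$ in the bucket of its unreachable target $Z$. These are new live original lineages born while $X$ is processed. Their targets are not components of the production of $X$ and may lie arbitrarily deep below it. Your argument that each new lineage ``is a truncation of a component of the original production of $X$'' does not cover them. So the hypothesis your third step relies on (every lineage created at $X'$ ends with target below $X'$) is not established.

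The missing piece is a descendant argument. Activation starts at $Y$, an original child of $X$, and follows only original structural edges. Hence $Q$ is a strict descendant of $X$ in the original DAG, so $Q<X$. Since $Z$ occurs on the right-hand side of the original production of $Q$, we get $Z<Q<X$.

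Two incidental points. The Case-2 boundary truncations themselves are not live, because they truncate the base $Y$, which has just been exposed; treating them as potentially live is harmless but inaccurate. Under full-length shielding, Case~3 performs no activation, so \textsc{Activate} after Case~2 is the only activation to check, though the same descendant argument would cover Case~3 if it did activate.
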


\begin{proof}
Consider first an already existing live lineage. In Case 1, the requested interval is composed through the original production of its current target. If the target is a concatenation or repetition, the redirected interval lies in one of the original children; if the target is itself a truncation, the composition redirects the interval to the original target of that truncation rule. In every case the new target occurs on the right-hand side of an original production and is therefore strictly smaller than the old target.

Now consider lineages created while an original target $X$ is processed. Case 0 creates no live lineage. Case 2 exposes the repetition base before its boundary truncations are used, so those boundary truncations are internal rather than live. Case 3 can create at most the shared boundary lineages, and each of them truncates an original child of $X$; hence its target is strictly smaller than $X$.

The remaining source of a new original live lineage is structural activation. Suppose that an activation performed while processing $X$ makes an original truncation left-hand side $Q$ structurally reachable. The activation starts from an original child of $X$ and propagates only through original concatenation and repetition edges, so $Q$ is a strict descendant of $X$ in the original dependency DAG and is therefore strictly smaller than $X$. If the original rule of $Q$ is $Q\to Z[b..e)$, then $Z$ is smaller than $Q$, and hence also strictly smaller than $X$. Thus the newly live lineage of $Q$ cannot target $X$ or any larger original nonterminal.

After $X$ has been completed, every later processed target is itself strictly smaller than $X$. Repeating the same argument shows that all subsequent redirects and all newly created live targets are smaller still. Therefore no live lineage can ever arrive at $X$ after its processing has begun.
\end{proof}

The conversion canonicalizes clean referrers before any Case-1 redirect through the current target. Hence at most two clean lineages can lose alignment at one original truncation target.

\begin{lemma}[At most two taints per truncation target]
\label{lem:two-taints}
During the processing of one original truncation target, at most two generated clean lineages become tainted.
\end{lemma}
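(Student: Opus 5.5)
The plan is to combine three earlier facts: the canonicalization of Lemma~\ref{lem:canonical}, the no-late-arrivals property of Lemma~\ref{lem:no-late-arrivals}, and the Taint origin lemma. Fix an original truncation target $X\to Y[b..e)$ and consider the moment its processing begins.

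First, I will show that the set of live lineages whose current target is $X$ is frozen from the start of processing. By Lemma~\ref{lem:no-late-arrivals}, no operation performed at $X$ or afterwards can create a live lineage targeting $X$, nor redirect one onto $X$. So every generated lineage that could become tainted at $X$ is already present when processing begins. Since the conversion canonicalizes clean referrers before any Case-1 redirect through $X$, I apply Lemma~\ref{lem:canonical} at that moment. Afterwards the live generated clean lineages targeting $X$ are either a single full-length representative, or at most one maximal clean prefix together with at most one maximal clean suffix. Every other clean lineage has been rewritten as an internal truncation of a reachable representative, so it is no longer live and has terminated.

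Second, I will argue that only live lineages can be tainted at $X$. A lineage becomes tainted only at a Case-1 redirect, and Case 1 is applied only to live truncations whose current target is $X$. Tainted lineages that reach $X$ cannot become tainted again, because taint is permanent, and the lemma counts only clean lineages becoming tainted. By the Taint origin lemma, taint here can arise only through the Case-1 redirect through $X$'s truncation production. That redirect is applied once to each surviving live referrer, so each contributes at most one taint event. Hence at most two generated clean lineages become tainted; in the full-length case the bound is even one.

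The main obstacle is the ordering subtlety: I must make sure that no lineage becomes clean-live at $X$ after canonicalization but before the redirects. Two sources must be ruled out. Case-3 lineages produced elsewhere are excluded by Lemma~\ref{lem:no-late-arrivals}. Representatives chosen during canonicalization could be re-split, but canonicalization introduces no new nonterminals and its rewritten rules target the reachable representative, not $X$. I would state explicitly that canonicalization at $X$ is performed exactly once, immediately before the redirects, and that the frozen-set argument makes this sufficient.
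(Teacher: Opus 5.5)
Your proposal is correct and follows essentially the same route as the paper. Lemma~\ref{lem:canonical} leaves at most one live clean prefix and one live clean suffix (or a single full-length representative), and Lemma~\ref{lem:no-late-arrivals} makes this set final once processing of the target begins; your extra remarks on the Taint origin lemma and on canonicalization preceding the Case-1 redirects only spell out what the paper's short proof leaves implicit.
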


\begin{proof}
Immediately before the redirect through the target, Lemma~\ref{lem:canonical} leaves literally at most one live clean prefix and one live clean suffix, or a single full-length representative; ties at maximum length have already been collapsed. By Lemma~\ref{lem:no-late-arrivals}, this canonical set is final: no live lineage can arrive at the target after its processing has begun. Therefore at most two clean lineages can lose alignment at that target.
\end{proof}

\begin{example}[Canonicalization, tainting, and a breaker]
\label{ex:breaker}
Consider an intermediate checkpoint at which an original unreachable target $X$ is defined by
\[
   X\to Z[3..10),
   \qquad |\val{Z}|=12.
\]
Thus $|\val{X}|=7$. Suppose that earlier Case-3 events have produced three structurally reachable generated clean lineages currently pointing to $X$:
\[
   A\to X[1..3),\qquad
   B\to X[1..5),\qquad
   C\to X[5..8).
\]
The first two are clean prefixes of $X$, while $C$ is a clean suffix. Canonicalization keeps the longer prefix $B$ and rewrites
\[
   A\to B[1..3),
\]
which is internal because $B$ is reachable. Hence only $B$ and $C$ remain live.

Processing the original truncation rule of $X$ composes the two surviving intervals with $X\to Z[3..10)$:
\[
   B\to Z[3..7),
   \qquad
   C\to Z[7..10).
\]
Both intervals lie strictly inside $\val{Z}$, so both lineages lose endpoint alignment and become permanently tainted. The target $X$ is therefore a breaker and is deleted after its bucket is exhausted.

Since $X$ is unreachable at the breaker event, monotonicity of structural reachability implies that $X$ was never reachable earlier. Consequently, its own original truncation rule never contributed a live original truncation lineage. This is exactly the role separation used later in Lemma~\ref{lem:lineage-breaker-disjoint}.
\end{example}

\subsection{A maximal-core Case-2 decomposition}

Starting from the suffix--core--prefix Case-2 decomposition of Migita, Uehata, and I~\cite{MigitaEtAl2026}, we impose a maximal-core convention in which every complete copy of the repetition base is absorbed into the core. This is the one place where clean alignment changes the local helper bound.

Let $X\to Y^r$, let $L=|\val{Y}|$, and consider an interval $X[b..e)$ spanning more than two copies of $Y$. Define
\[
s=
\begin{cases}
0,&(b-1)\bmod L=0,\\
L-((b-1)\bmod L),&\text{otherwise},
\end{cases}
\qquad
p=(e-1)\bmod L,
\]
and set $k=(e-b-s-p)/L$. Then
\[
   X[b..e)=S\,Y^k\,P,
\]
where $S$ is either empty or a proper suffix of $Y$, and $P$ is either empty or a proper prefix of $Y$. The core is maximal: a boundary equal to a whole copy of $Y$ is absorbed into $Y^k$.

If at least one boundary piece is nonempty, the core is represented directly by $Y$ when $k=1$, by one concatenation helper $R\to YY$ when $k=2$, and by one repetition helper $R\to Y^k$ when $k>2$. Proper nonempty boundary pieces need one truncation helper each. The existing referrer is then used as the final concatenation node, with at most one additional concatenation helper when all three components are present.

There is one degenerate but important subcase. If both boundary pieces are empty, then the requested substring is exactly $Y^k$. Because Case 2 applies only to intervals spanning more than two copies of $Y$, necessarily $k\ge3$. In this situation no core helper is created: the existing referrer itself is rewritten directly as
\[
   Q\to Y^k.
\]
This avoids an inadmissible unary production $Q\to R$ and strictly reduces the helper cost.

\begin{lemma}[Case-2 helper bounds]
\label{lem:case2}
A generic Case-2 referrer needs at most four new helper nonterminals. If the referrer belongs to a generated clean lineage, at most two new helpers are needed.
\end{lemma}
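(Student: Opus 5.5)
The plan is a direct count based on the maximal-core decomposition $X[b..e)=S\,Y^k\,P$ just described. The helpers come in three kinds: at most one \emph{core helper} for $Y^k$, one \emph{truncation helper} for each nonempty proper boundary piece, and at most one \emph{extra concatenation helper}. The existing referrer $Q$ is reused as the final concatenation node, so it is never counted as new. I will first settle the generic bound by cases on which boundary pieces are empty, and then specialise to clean lineages.

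For the generic bound there are three cases.
\begin{itemize}[leftmargin=1.5em]
\item \emph{Both pieces empty.} The degenerate subcase forces $k\ge3$, and $Q\to Y^k$ is written in place with zero helpers.
\item \emph{Exactly one piece nonempty.} Say $P$ is nonempty and $S$ is empty; the other case is symmetric. We need at most one core helper ($Y$ itself when $k=1$, $R\to YY$ when $k=2$, $R\to Y^k$ when $k>2$) and one truncation helper $P'\to Y[1..p+1)$. Then $Q\to RP'$ (or $Q\to YP'$ when $k=1$), for at most two helpers.
\item \emph{Both pieces nonempty.} We need at most one core helper and two truncation helpers $S',P'$. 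One concatenation helper $U\to S'R$ is added, and $Q\to UP'$. This gives at most four helpers.
\end{itemize}
In all cases the helpers $S'$ and $P'$ are truncations of $Y$. The bound therefore relies on Case~2 first exposing $Y$ structurally, which holds because the core mentions $Y$ directly or through $R$ and $R$ is reachable from $Q$. Hence these helpers are internal and introduce no new live work, as already noted in the proof of Lemma~\ref{lem:no-late-arrivals}. I should also check that $k\ge1$ whenever a boundary piece is nonempty. This follows from the hypothesis that the interval spans more than two copies of $Y$, since the proper pieces $S$ and $P$ cover strictly less than one copy each.

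For the clean bound I will use the definition of cleanliness: a generated clean lineage currently represents a prefix or a suffix of its current target $X$.
\begin{itemize}[leftmargin=1.5em]
\item If it is a prefix, then $b=1$, so $(b-1)\bmod L=0$ and $s=0$. The piece $S$ is empty.
\item If it is a suffix, then $e-1=|\val{X}|=rL$, so $p=0$ and $P$ is empty.
\item A full-length representative has both pieces empty.
\end{itemize}
In every case at most one boundary piece is nonempty, and the case analysis above gives at most two helpers.

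The step needing the most care is the maximal-core convention itself. The endpoint computation above shows that a clean lineage never produces a boundary equal to a whole copy of $Y$. Under the original suffix--core--prefix split, such a whole-copy boundary would be a full-length truncation helper, and it would not vanish. Absorbing whole copies into $Y^k$ is what turns ``aligned endpoint'' into ``empty boundary piece''. The remaining point to verify is that no unary production $Q\to R$ is ever created. The degenerate subcase handles this when both pieces are empty, and in every other case $Q$ is a genuine concatenation of two components.
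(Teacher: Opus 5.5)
Your proposal is correct and follows essentially the same route as the paper. The generic bound counts at most two proper boundary truncation helpers, one core helper and one extra concatenation helper, with the core-only subcase rewritten directly as $Q\to Y^k$. The clean bound observes that $b=1$ forces $s=0$ and $e=|\val{X}|+1$ forces $p=0$, so only the core and one boundary piece remain, and the existing referrer combines them.
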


\begin{proof}
For an arbitrary interval with at least one nonempty boundary, there are at most two proper boundary helpers, at most one core helper, and at most one additional concatenation helper, for a total of four. If both boundaries are empty, the referrer is rewritten directly as $Q\to Y^k$ with $k\ge3$, so no helper is needed at all. Thus the generic four-helper bound still holds.

If the generated clean lineage is a prefix of $X$, then $b=1$ and hence $s=0$. If it is a suffix, then $e=|\val{X}|+1$ and hence $p=0$. Therefore at least one boundary is genuinely absent. If both are absent, the direct rewrite $Q\to Y^k$ uses no helper. Otherwise the decomposition has exactly two nonempty components: the maximal core and one proper boundary piece. The existing referrer combines them, so at most one core helper and one boundary helper are new.
\end{proof}

\begin{example}[Why the maximal core matters]
Let $|\val{Y}|=2$, let $X\to Y^5$, and consider the clean prefix $Q\to X[1..8)$. Its value has length $7$. Under the maximal-core convention, $X[1..8)=Y^3Y[1..2)$. Thus one core helper and one proper-prefix helper suffice, with the existing $Q$ used as the final concatenation node. Treating the first full copy of $Y$ as a separate boundary component would introduce an unnecessary extra concatenation helper and would destroy the two-helper clean bound.
\end{example}

\begin{example}[Core-only Case 2]
Let $Y\to\mathtt a$, $X\to Y^5$, and let the reachable referrer be $Q\to X[1..4)$. Here $s=p=0$ and $k=3$, so the requested value is exactly $Y^3$. The correct rewrite is directly
\[
   Q\to Y^3.
\]
Introducing a helper $R\to Y^3$ would leave only the unary assignment $Q\to R$, which is not a production type of the collage-system model.
\end{example}

\subsection{Case 3 and live generated lineages}

For Case 3 we use the shared construction of Migita et al.~\cite{MigitaEtAl2026}. All crossing referrers of the same target are processed together. The longest required suffix and longest required prefix are materialized once, and every participating referrer is rewritten through them.

We fix one additional convention that is relevant to reachability. If one of these shared boundary pieces coincides with an entire original child $Y$, we do \emph{not} replace it by a direct reference to $Y$. Instead we still materialize it as the full-length truncation helper
\[
   S\to Y[1..|\val{Y}|+1)
\]
(or symmetrically $P\to Y[1..|\val{Y}|+1)$). Thus Case 3 never makes an original child structurally reachable merely because a requested boundary happens to cover that child completely. We call this the \emph{full-length shielding convention}.

\begin{lemma}[Case-3 bound with full-length shielding]
\label{lem:case3}
Suppose that $q$ live referrers participate in one Case-3 processing. Under the full-length shielding convention, at most $2q$ new nonterminals are created, at most two of them are live generated truncations after the event, and no original child is structurally activated merely because a shared boundary equals that whole child.
\end{lemma}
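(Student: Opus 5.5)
The proof inspects the shared Case-3 construction of Migita et al.\ and checks three things under the full-length shielding convention: the helper count, the number of live truncations that survive, and the absence of any activation through a boundary.

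Fix the target $X$, with original production $X\to YZ$ or $X\to Y^r$. In the repetition case the crossing is between two consecutive copies, so the left piece is a suffix of $Y$ and the right piece is a prefix of $Y$. Let $Q_1,\dots,Q_q$ be the crossing referrers. Each requested interval splits as $\val{Q_i}=S_i P_i$, where $S_i$ is a nonempty suffix of the left child and $P_i$ is a nonempty prefix of the right child.

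\textbf{Shared pieces.} Suffixes of one string are nested, and so are prefixes. Let $S^\star$ be the longest $S_i$ and $P^\star$ the longest $P_i$. We materialize them once:
\[
S^\star\to Y[\,|\val{Y}|-|S^\star|+1..|\val{Y}|+1), \qquad P^\star\to Z[1..|P^\star|+1)
\]
(with $Z=Y$ in the repetition case). We do this even when a piece has full length, as the shielding convention requires.

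\textbf{Per-referrer rewrite.} For each $i$:
\begin{itemize}[leftmargin=1.5em]
\item If $S_i=S^\star$, the piece is $S^\star$ itself. Otherwise we create a helper $S_i'\to S^\star[\,|S^\star|-|S_i|+1..|S^\star|+1)$. The prefix piece is handled symmetrically with $P^\star$.
\item Then we rewrite $Q_i\to S_i'P_i'$, reusing $Q_i$ as the concatenation node.
\end{itemize}

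\textbf{Counting.} A clean way to count is to charge $S^\star$ and $P^\star$ to the referrers that realize the maxima. For each $i$, its suffix piece is either the shared $S^\star$ (charged to it, or to the first referrer attaining the maximum) or a new helper; likewise for its prefix piece. So each referrer accounts for at most two new nonterminals, and the total is at most $2q$. A tied maximum referrer simply reuses $S^\star$ at no cost.

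\textbf{Liveness.} A new truncation is live only if its target is an unreachable original nonterminal. The helpers $S_i'$ and $P_i'$ target $S^\star$ or $P^\star$. These are not original, and after the rewrite they are structurally reachable, being exposed by some $Q_i$ that is itself reachable. So those helpers are internal. Only $S^\star$ and $P^\star$ target original children, which gives at most two live generated truncations. The $Q_i$ themselves are now concatenations, so they are not truncations at all.

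\textbf{No activation.} The only new concatenation edges go from $Q_i$ to helper nonterminals, never to $Y$ or $Z$ directly. Even when $S^\star$ or $P^\star$ covers a whole child, it is a truncation node, and truncation edges do not expose their targets by the definition of structural reachability. Hence no original child becomes reachable through a boundary coincidence.

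\textbf{Main obstacle.} The delicate point is the $2q$ count when some $S_i$ or $P_i$ equals the shared maximum, or when the same referrer realizes both maxima. The per-referrer charging above handles this, because each referrer has only two boundary slots and each slot receives at most one new node. A second check is acyclicity of the helper edges into $S^\star$ and $P^\star$. These nodes depend only on original children of $X$, which are strictly below every $Q_i$, so no cycle can form.
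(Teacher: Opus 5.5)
Your proof is correct and follows essentially the same route as the paper. Both materialize the longest required suffix and prefix once, as truncations even at full length. Shorter pieces become truncations of these reachable shared nodes, which gives at most two new nodes per referrer, at most two live generated truncations, and no structural exposure of the child. The paper simply cites the shared construction of Migita et al.\ for the $2q$ count, whereas you spell out the construction, the per-slot charging, and an acyclicity check explicitly.
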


\begin{proof}
The standard shared Case-3 construction of~\cite{MigitaEtAl2026} materializes the longest required suffix and the longest required prefix once, and expresses shorter requests through those shared objects. This uses at most two new objects per participating referrer.

If a longest boundary is the whole original child $Y$, full-length shielding materializes it as
\[
   S\to Y[1..|\val{Y}|+1)
\]
(or symmetrically for the prefix side) rather than using a direct structural reference to $Y$. This occupies exactly the boundary-helper slot already counted for that side, so the $2q$ bound is unchanged. Because the helper is still a truncation rule, it does not structurally expose $Y$.

Every new object other than the shared longest suffix and prefix refers only to reachable generated objects and is internal. Hence at most those two shared boundary truncations can remain live after the event.
\end{proof}

If a repetition target has any Case-2 referrer, we process Case 2 first. The first such rewrite exposes the repetition base $Y$ structurally. Any remaining interval crossing exactly two copies can then be represented by at most two truncations of the now-reachable $Y$ and creates no new live lineage. This priority rule can only reduce the number of generated live lineages and will be assumed below.

\subsection{Operational form and correctness}

The algorithm maintains, for each unreachable original target, a bucket containing the live lineages currently pointing to that target. Original targets are processed in decreasing topological order.

\begin{algorithm}[H]
\caption{Factor-$5$ internalization}\label{alg:conversion}
\small
\begin{algorithmic}[1]
\Require Collage system $\C$ with original nonterminals in decreasing topological order
\State initialize structural reachability from the start symbol
\State register every reachable original truncation lineage in the bucket of its unreachable target
\For{each original target $X$ in decreasing topological order}
   \If{$X$ is reachable}
      \State \textbf{continue}
   \EndIf
   \If{the live bucket of $X$ is empty}
      \State delete $X$ and \textbf{continue}
   \EndIf
   \State canonicalize all generated clean referrers of $X$ and remove every lineage thereby made internal
   \State apply Cases 0 and 1 to all referrers to which they apply, removing every terminated lineage
   \State for every surviving Case-1 redirect, preserve \textsc{tainted} permanently; if a \textsc{clean} lineage loses endpoint alignment, mark it \textsc{tainted}
   \State move every redirected live lineage to the bucket of its new target
   \If{crossing referrers remain}
      \If{$X$ is a repetition target with at least one Case-2 referrer}
         \State process all Case-2 referrers using the maximal-core decomposition (directly rewriting a core-only referrer as $Q\to Y^k$), remove them from the live bucket, and immediately \textsc{Activate} the repetition base as one logical update
         \State resolve every remaining two-copy referrer internally and remove it from the live bucket
      \Else
         \State process all remaining crossings jointly by Case 3 using full-length shielding, and remove the input lineages from the live bucket
         \State register the at most two newly created live boundary lineages
      \EndIf
   \EndIf
   \State delete $X$ after its bucket has been exhausted
\EndFor
\State \Return the resulting system
\Statex
\Procedure{Activate}{$Y$}
   \State mark $Y$ reachable and propagate through original concatenation/repetition edges
   \State clear the live bucket of every original target that becomes reachable
   \State register every newly reachable original truncation lhs in the bucket of its unreachable target
\EndProcedure
\end{algorithmic}
\end{algorithm}

The next invariant separates the two roles that an original truncation nonterminal may play in the counting argument: it may become structurally reachable and act as a referrer, or it may remain hidden and later be processed as an unreachable target, but it cannot do both.

\begin{lemma}[Monotonicity of structural reachability]
\label{lem:reachability-monotone}
Once an original nonterminal becomes structurally reachable during Algorithm~\ref{alg:conversion}, it remains structurally reachable for the rest of the conversion.
\end{lemma}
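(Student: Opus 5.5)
I would prove this by induction over the steps of Algorithm~\ref{alg:conversion}. The invariant is: if an original nonterminal $Y$ is structurally reachable before a step, it is still reachable after that step. Structural reachability is defined through exposing edges, namely atomic, concatenation and repetition edges starting at the start symbol. So it suffices to show that every step that removes or rewrites a production keeps some exposing path from the start symbol to each already reachable original nonterminal. Steps that only add productions or only add exposing edges can only enlarge the reachable set, because reachability is monotone in the set of exposing edges. \textsc{Activate} only marks further nodes reachable and never unmarks any.

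The key steps, in order, are the following.

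\textbf{Which productions get rewritten.} Every rewrite in the algorithm (canonicalization, Case~0, Case~1, Cases~2 and~3) changes the production of a referrer $Q$ whose current rule is a truncation $Q\to X[b..e)$ with $X$ unreachable. The only other operation is deleting an unreachable target $X$ once its bucket is exhausted. A truncation edge exposes nothing. Hence replacing $Q$'s production removes no exposing edge: the old production contributed nothing to reachability. The new production, whether atomic, a concatenation, a repetition, or an internal or live truncation, can only add exposing edges.

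\textbf{Deletions.} Only unreachable targets $X$ are deleted. Deleting an unreachable node cannot cut an exposing path to a reachable node, because any node on such a path is itself reachable.

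\textbf{Rewrites of $Q$.} The referrer $Q$ itself stays reachable, because its own incoming exposing edges are untouched. Any newly created helpers keep that property or become reachable through $Q$.

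\textbf{Original rules are never altered.} A subtle point is that an original nonterminal $Y$ which is reachable may itself be an original truncation lhs acting as a referrer. Its production then changes, but by the first step only its outgoing truncation edge disappears. Every production of an original concatenation or repetition node that is reachable is left unaltered. This holds because referrers are always truncations, and canonicalization also rewrites only truncation-rule nodes.

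The main obstacle I expect is verifying exhaustively that no step ever rewrites a non-truncation production, or removes an exposing edge, of a node that is already reachable. In particular I must check that Case~3 reuses the existing referrer as a concatenation node without touching other nodes, and that canonicalization only redirects truncations. I would handle this with a single uniform claim: "every modified production was a truncation rule immediately before modification". I would check this claim case by case against the descriptions of Cases~0--3, the maximal-core Case~2, and Lemma~\ref{lem:canonical}, and then conclude by induction on the number of steps.
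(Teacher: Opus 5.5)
Your argument is correct, but it is organized differently from the paper's.

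\textbf{The paper's route.} The paper tracks explicit witness paths. Initial witnesses are built from original concatenation and repetition edges. Later witnesses arise only from a Case-2 rewrite followed by \textsc{Activate}. The paper uses full-length shielding to rule out any other activation, in particular through Case~3. It then checks, path by path, that every edge survives: the Case-2 referrer and core helper are never rewritten, and the original productions used by the propagation are never rewritten.

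\textbf{Your route.} You prove a single edge-level invariant: every production that is ever modified was a truncation immediately before the modification, and only unreachable targets are deleted. Hence no exposing edge on a path from the start symbol is ever removed. Monotonicity then follows regardless of how a node became reachable.

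\textbf{What each buys.} Your route is shorter. It also makes the lemma independent of the enumeration of activation events and of the shielding convention.

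\textbf{One tacit assumption.} Your deletion step (``only unreachable targets are deleted'') assumes that the algorithm's reachability test agrees with true structural reachability. That test is maintained incrementally through marks and \textsc{Activate}. The agreement is exactly what the paper's enumeration secures: only Case~2 plus \textsc{Activate} creates new exposing references to original nonterminals, thanks to full-length shielding. To make your proof self-contained at the level of the actual bookkeeping, add that observation. Without it, a Case-3 rewrite that referenced a whole child directly could expose it without marking it, and it could later be deleted.
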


\begin{proof}
We identify the structural witness that makes a nonterminal reachable and show that this witness is never removed.

First consider reachability present at initialization. It is obtained from the start symbol by following original concatenation and repetition edges. Every nonterminal on such a witness path is reachable. The production at the tail of each witness edge is therefore an original concatenation or repetition production of a reachable nonterminal. Algorithm~\ref{alg:conversion} rewrites only truncation referrers; it never rewrites such a structural production. Moreover, reachable original nonterminals are never deleted. Hence every initial witness path persists.

Now consider a nonterminal that becomes reachable later. By the full-length shielding convention, Case 3 does not structurally activate an original child. The only new structural activation is therefore the one performed by \textsc{Activate} after the first Case-2 rewrite on a repetition target. That rewrite terminates the live lineage of its reachable referrer and replaces its truncation production by a stable expansion containing the repetition base. In the core-only subcase this is the direct repetition $Q\to Y^k$; otherwise the base is exposed directly in the final concatenation or through the newly created core helper. The referrer is not rewritten again after this terminal event, and any core helper has a concatenation or repetition production that is likewise never rewritten. Thus the structural path witnessing the activation of the repetition base persists.

Procedure \textsc{Activate} then propagates reachability only through original concatenation and repetition productions of newly reached original nonterminals. Those nonterminals are strictly below the currently processed target and have not yet been processed as unreachable targets. Their structural productions are therefore still their original ones. Once they become reachable, Algorithm~\ref{alg:conversion} will never delete them, and, as above, their concatenation or repetition productions are never rewritten. Hence every propagation edge added to the reachability witness also persists.

No other operation makes an original nonterminal newly structurally reachable. Therefore every original nonterminal that ever becomes structurally reachable retains a permanent structural path from the start symbol.
\end{proof}

The live buckets are not merely an implementation device: at the stable checkpoints of the algorithm they exactly represent the unresolved dependencies from the structurally reachable part of the current grammar. A \emph{checkpoint} is a state immediately after initialization or after one complete logical update has finished, including all associated reachability and bucket updates. In particular, a Case-2 rewrite together with the immediately following call to \textsc{Activate} is one logical update; no bucket invariant is asserted in the transient state between those two actions.

\begin{lemma}[Live-bucket completeness and safe deletion]
\label{lem:bucket-complete}
At every checkpoint of Algorithm~\ref{alg:conversion}, for every unreachable original target $Y$, its live bucket contains exactly the live truncation lineages whose current target is $Y$. In particular, every surviving truncation from a structurally reachable left-hand side to an unreachable original target is represented by a lineage in the corresponding live bucket.

Immediately before an original target $X$ is processed, every original nonterminal strictly larger than $X$ has already either become structurally reachable or been deleted. Consequently, if $X$ is unreachable and its live bucket is empty, deleting $X$ preserves the string generated by the start symbol.
\end{lemma}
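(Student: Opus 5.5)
I would prove the two parts of Lemma~\ref{lem:bucket-complete} in order. The first part (exact bucket contents) goes by induction over checkpoints, since each checkpoint is reached from the previous one by a single logical update. The second part (no larger original survives, and safe deletion) then follows from the first together with Lemma~\ref{lem:no-late-arrivals} and Lemma~\ref{lem:reachability-monotone}.

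\textbf{Base case.} At initialization, reachability is computed from the start symbol through original concatenation and repetition edges. No rewriting has happened yet, so the truncation productions with reachable left-hand side are exactly the reachable original truncation rules. Line~2 registers each of them whose target is unreachable. These are, by definition, exactly the live lineages, so the invariant holds.

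\textbf{Inductive step.} I would check every kind of logical update and show that it keeps the two sides of the invariant in step. The two sides are the set of live truncations and the set of bucket entries.

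\emph{Canonicalization.} It rewrites clean referrers to truncations of a reachable representative. Those referrers stop being live, and they are removed from the bucket. The chosen representative is untouched.

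\emph{Cases 0 and 1.} A lineage either terminates and is removed, or is redirected to a new original target. The redirected lineage is moved to the bucket of its new target if that target is unreachable. If the target is already reachable, the truncation is internal and the lineage terminates. By Lemma~\ref{lem:no-late-arrivals}, the new target is strictly smaller than $X$, so it has not yet been deleted.

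\emph{Case 3.} The input referrers now point to generated reachable objects and are removed. By Lemma~\ref{lem:case3}, the only new live truncations are the at most two shared boundaries, and line~22 registers exactly those. The full-length shielding convention ensures that no original child becomes reachable here, so no bucket has to be cleared.

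\emph{Case 2 with \textsc{Activate}.} This update needs the most care, which is why it is treated as one logical step. Its rewrites remove the Case-2 and two-copy referrers. Activation makes a downward-closed set of original nonterminals reachable, so every truncation pointing into that set stops being live; clearing those buckets matches this exactly. Every newly reachable original truncation left-hand side becomes a new live truncation precisely when its target is still unreachable, and the procedure registers exactly those. I expect the main obstacle to be showing that nothing else changes liveness. Two facts give this: reachability is monotone by Lemma~\ref{lem:reachability-monotone}, and activation only adds reachability. Hence no truncation outside the listed ones can start or stop being live.

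\emph{Deleting $X$.} This removes a target whose bucket is empty. By the invariant, no live truncation points to $X$, so no bucket refers to it.

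\textbf{Second statement.} The loop visits original targets in decreasing order. Each visited target is either skipped because it is reachable (and stays reachable, by Lemma~\ref{lem:reachability-monotone}) or deleted at the end of its iteration, once its bucket is exhausted. So every original larger than $X$ is reachable or gone.

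\textbf{Safe deletion.} Suppose $X$ is unreachable and its bucket is empty. Consider which nonterminals could still reference $X$.
\begin{itemize}
\item Reachable nonterminals cannot reference $X$ structurally, since that would make $X$ reachable. By the invariant, they cannot reference it by a live truncation either.
\item Generated nonterminals are reachable, or they reference only generated or smaller objects by construction.
\item Unreachable original nonterminals that reference $X$ are larger than $X$, and those have already been deleted.
\end{itemize}
Therefore $X$ has no in-edges from anything that survives. Removing it changes no production on which $\val{S}$ depends, so the generated string is preserved.

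The delicate points are two. First, generated helpers must only ever reference reachable objects or register as live lineages; this is guaranteed by Lemmas~\ref{lem:case2} and~\ref{lem:case3}. Second, redirected lineages never land on already-deleted targets, which is exactly what the descent property of Lemma~\ref{lem:no-late-arrivals} secures.
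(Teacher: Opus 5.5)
Your proof is correct and follows essentially the same route as the paper: an induction over checkpoints with the same update-by-update accounting (canonicalization, Cases 0/1, Case 3, and Case 2 together with \textsc{Activate} as one logical step), followed by the decreasing-scan argument and the elimination of every possible referrer of $X$ using the invariant, the observation that structural references would expose $X$, and Lemma~\ref{lem:no-late-arrivals}. One small tightening: your bullet on generated nonterminals should rest only on the fact that every generated nonterminal is exposed by a reachable referrer and structural productions are never rewritten, which places it under your first bullet. The alternative ``reference only generated or smaller objects'' does not exclude references to $X$, since a Case-3 boundary created at a parent of $X$ truncates $X$ itself.
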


\begin{proof}
We prove the checkpoint invariant by induction over the complete logical updates of the algorithm. At initialization there are no generated lineages. Structural reachability is propagated from the start symbol through original concatenation and repetition edges, and every reachable original truncation left-hand side whose target is still unreachable is registered in that target's bucket. Thus the invariant holds at the initial checkpoint.

Canonicalization may make a noncanonical generated clean lineage internal by redirecting it to a reachable canonical representative; after that rewrite the terminated lineage is removed from the live bucket, so the next checkpoint is consistent. In Cases 0 and 1, all terminal lineages are first removed and every surviving redirect is then inserted into the bucket of its new unreachable target; a redirect whose new target is already reachable becomes internal and is inserted into no live bucket. Thus the invariant holds after the Case-0/Case-1 update block.

For Case 2, the rewrite of the participating referrers and the immediately following call to \textsc{Activate} are treated as a single logical update. The Case-2 lineages terminate and are removed. Activation may make additional original nonterminals structurally reachable; every bucket whose target becomes reachable is then cleared because all its entries cease to be live, while every newly reachable original truncation left-hand side is registered in the bucket of its still-unreachable original target. Hence the invariant is restored at the checkpoint after activation. Residual two-copy resolutions only terminate and remove their input lineages. Case 3 terminates its input lineages and, before the next checkpoint, explicitly registers each newly created boundary truncation that is live. Therefore every checkpoint contains exactly the live lineages indexed by their current unreachable original targets.

A structurally reachable rule cannot contain an ordinary concatenation or repetition reference to an unreachable original nonterminal, because such a reference would expose that target and make it structurally reachable. Therefore any surviving reference from the reachable part of the grammar to an unreachable original target must be a truncation edge and, by the checkpoint invariant, is represented in the corresponding live bucket.

Finally, every original nonterminal that could refer to the current target $X$ is strictly larger than $X$ in the original topological order and has therefore already been encountered in the decreasing scan. If such a predecessor survived and is structurally reachable, then a surviving truncation reference to $X$ would place a lineage in the bucket of $X$, while an ordinary structural reference would make $X$ reachable. If the predecessor remained unreachable, it was deleted when its own turn was processed. By Lemma~\ref{lem:no-late-arrivals}, no smaller target processed later can create a new live reference back to $X$. Hence an unreachable $X$ with empty live bucket has no surviving incoming dependency from the part of the grammar that determines the start string, so deleting it preserves that string.
\end{proof}

\begin{lemma}[Correctness]
Algorithm~\ref{alg:conversion} preserves the string generated by the start symbol and returns an internal collage system.
\end{lemma}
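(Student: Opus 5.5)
The plan has two parts: string preservation, checked one logical update at a time, and internality at termination.

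\emph{String preservation.} I would show that each rewrite keeps the value of every structurally reachable nonterminal unchanged, so the start symbol keeps its value.
\begin{itemize}[leftmargin=1.5em]
\item Canonicalization (Lemma~\ref{lem:canonical}) replaces $Q\to X[1..|\val{Q}|+1)$ by $Q\to Q^\star[1..|\val{Q}|+1)$. Since $\val{Q^\star}$ is a prefix of $\val{X}$, the value is unchanged; suffixes and full-length representatives are symmetric. Acyclicity is already established there.
\item Case~0 substitutes the atomic value.
\item Case~1 composes intervals through the original production of $X$. Because $X$ is still original and unprocessed, its production is the input one, so the composed interval denotes the same substring.
\item Case~2 uses the identity $X[b..e)=S\,Y^k\,P$ with the stated $s,p,k$, including the core-only rewrite $Q\to Y^k$. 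The residual two-copy referrers are split into a suffix and a prefix of $Y$.
\item Case~3 is the shared construction of~\cite{MigitaEtAl2026}. Full-length shielding only replaces a reference to $Y$ by $Y[1..|\val{Y}|+1)$, which has the same value.
\end{itemize}
All new dependencies point to strictly smaller original nonterminals or to reachable helpers, so acyclicity is kept, by the same argument as Lemma~\ref{lem:canonical} and Lemma~\ref{lem:no-late-arrivals}. Deleting $X$ preserves the string by Lemma~\ref{lem:bucket-complete}.

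\emph{Internality at termination.} I would argue that no live truncation survives and no unreachable nonterminal remains.
\begin{itemize}[leftmargin=1.5em]
\item When $X$ is processed, every lineage in its bucket is handled in one of three ways: it is made internal by canonicalization, terminated by Cases~0, 2, 3 or the two-copy resolution, or moved by Case~1 to a strictly smaller target (Lemma~\ref{lem:no-late-arrivals}).
\item No lineage can later return to $X$, again by Lemma~\ref{lem:no-late-arrivals}. So after the scan passes $X$, no live truncation targets $X$, and $X$ is either reachable or deleted.
\item Each lineage's target strictly decreases, so the scan terminates with every bucket empty.
\item By Lemma~\ref{lem:bucket-complete}, every truncation from the reachable part whose target is an original nonterminal therefore has a reachable target.
\end{itemize}

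\emph{Generated helpers.} I must also show that every helper is reachable and that its truncations are internal. Each helper is created as a component of a rewritten reachable referrer, so it is exposed structurally. Each helper's truncation points either to a reachable helper, to $Y$ after \textsc{Activate}, or into a live bucket. Original nonterminals that are neither reachable nor processed do not exist at the end, because every original is scanned and is either continued as reachable or deleted. Reachability never decreases (Lemma~\ref{lem:reachability-monotone}), so witnesses established earlier remain valid in the final system.

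\emph{Main obstacle.} The delicate step is showing that the final system contains no stale unreachable nonterminal. Two kinds of object need care. First, a generated helper whose referrer was later rewritten, for example a non-shared Case-3 object, must still be exposed. Second, an original nonterminal that was deleted must not still be referenced by some helper. My approach is to use the checkpoint invariant of Lemma~\ref{lem:bucket-complete} for the second point. For the first, I would argue by induction that each helper's structural parent is a referrer whose production is never rewritten after its terminal event, exactly as in the proof of Lemma~\ref{lem:reachability-monotone}. If a helper nevertheless turns out to be unreferenced, it can simply be garbage-collected. Doing so only decreases the size and does not affect the generated string.
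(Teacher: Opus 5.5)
Your proposal is correct and takes essentially the same route as the paper: you check value preservation rewrite by rewrite, justify safe deletion via Lemma~\ref{lem:bucket-complete} together with Lemma~\ref{lem:no-late-arrivals}, and get internality because every original ends reachable or deleted while every helper is exposed by a reachable referrer (your explicit acyclicity check is a small addition the paper leaves implicit). The garbage-collection fallback is unnecessary, and would in fact change the algorithm the lemma is about: the conversion only ever rewrites truncation productions, so every concatenation or repetition edge that exposes a helper persists to the end.
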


\begin{proof}
Cases 0--3 replace each referrer by an equivalent representation of the same substring; in the core-only Case-2 subcase the referrer itself is rewritten as the repetition $Q\to Y^k$. Canonicalization rewrites a shorter clean representative as a truncation of a longer reachable representative. Hence every rewrite preserves the start value.

If an unreachable target has an empty bucket, Lemma~\ref{lem:bucket-complete} shows that it has no surviving incoming dependency from the represented reachable structure and may be deleted. If its bucket is nonempty, the algorithm first terminates or redirects every live reference and then deletes the target. Lemma~\ref{lem:no-late-arrivals} guarantees that no later operation can recreate a live reference to a processed target.

At termination, every original nonterminal is therefore either structurally reachable or deleted. Every generated nonterminal is introduced in the expansion of a reachable referrer. A generated truncation lineage either remains live until it is processed by a terminal case or terminates earlier when its truncation becomes internal, including by target activation or canonicalization. Hence every retained nonterminal is structurally reachable and the resulting system is internal.
\end{proof}

\section{The factor-$5$ counting argument}
\label{sec:counting}

The proof reduces to a small number of counters. Let the input system have size $m$. Let $a$ be the number of original truncation rules, $\rho$ the number of original truncation lineages that are ever live, $g$ the number of generated live lineages created during the conversion, and $\tau$ the number of generated tainted lineages that terminate in Case 2. Let $s$ be the number of original Case-3 targets that create at least one live generated lineage, and let $b$ be the number of original truncation targets at which at least one generated clean lineage becomes tainted. We call the latter targets \emph{breakers}.

The monotonicity of structural reachability separates the two possible roles of an original truncation rule.

\begin{lemma}[Original-lineage/breaker disjointness]
\label{lem:lineage-breaker-disjoint}
With the notation above,
\[
   \rho+b\le a.
\]
\end{lemma}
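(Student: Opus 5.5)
We prove $\rho+b\le a$ with an injection argument: we map each ever-live original truncation lineage and each breaker to a distinct original truncation rule. Each such lineage is mapped to the original truncation rule $Q\to Z[b..e)$ whose left-hand side $Q$ starts it; by definition this lineage begins when that rule first becomes live. Each breaker is, by definition, an original truncation target $X$, and it is mapped to its own original truncation rule $X\to Z[b..e)$. Both maps are injective. A lineage is identified with the input rule that starts it, and a rule can start at most one original lineage, since once its lineage terminates the rule has been rewritten and is no longer an input truncation rule. Distinct breakers are distinct nonterminals, hence have distinct rules. It therefore suffices to show that the two images are disjoint: no original truncation rule both starts a live original lineage and serves as a breaker.

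Suppose some original truncation nonterminal $X$ did both. For its lineage to be live, $X$ must be structurally reachable at that moment, by the definition of live truncation. For $X$ to be a breaker, some generated clean lineage must be redirected by Case 1 through the original truncation rule of $X$ while $X$ is being processed as a target. Algorithm~\ref{alg:conversion} processes $X$ as a target, and so performs Case-1 redirects through it, only if $X$ is unreachable when its turn comes in the scan; if it is reachable the loop executes \textbf{continue}. Moreover, a live lineage only ever points to an unreachable target, so at the tainting event $X$ is unreachable.

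This is where we invoke Lemma~\ref{lem:reachability-monotone}. Since $X$ is unreachable at the breaker event, it was never reachable before that event. Its lineage therefore could only become live afterwards. However, the breaker event occurs during the processing of $X$, and immediately after that processing $X$ is deleted. Could $X$ become reachable during its own processing, for instance through \textsc{Activate}? That procedure only reaches nonterminals strictly below the current target, by Lemma~\ref{lem:no-late-arrivals}'s reasoning, so it cannot reach $X$ itself. Nor can $X$ become reachable later, since a deleted nonterminal has no production left to make live. Hence the rule of $X$ never starts a live lineage, which is the contradiction we need.

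The main obstacle is the timing argument: it must exclude both the case where $X$ is reachable before it acts as a breaker and the case where it becomes reachable after. Monotonicity handles the first. The second requires the observation that a breaker is deleted at the end of its own processing and that nothing activated during that processing lies at or above $X$. Once disjointness holds, the two injections into the $a$ original truncation rules give $\rho+b\le a$.
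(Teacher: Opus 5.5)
Your proof is correct and follows the same route as the paper. You identify each live original lineage and each breaker with an original truncation rule via its left-hand side, then use monotonicity of structural reachability (Lemma~\ref{lem:reachability-monotone}) to show that no rule plays both roles. Your explicit handling of the ``afterwards'' case makes precise a step the paper's proof leaves implicit: $X$ is deleted at the end of its own processing, and \textsc{Activate} reaches only strictly smaller nonterminals.

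One small correction concerns your reason that each rule starts at most one lineage. A lineage that ends because its target is activated does not rewrite the rule. The claim should instead be justified by monotonicity, which prevents that rule from ever becoming live again.
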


\begin{proof}
Associate each original truncation rule with its left-hand side. An original truncation lineage can become live only when that left-hand side is structurally reachable while the current target of the truncation is unreachable. Thus every rule counted by $\rho$ has a left-hand side that is structurally reachable at some time during the conversion.

By contrast, if an original truncation nonterminal $X$ is counted as a breaker, then $X$ is processed as an \emph{unreachable} original target: a live clean generated lineage points to $X$, and its Case-1 composition through the original truncation rule of $X$ destroys endpoint alignment. Therefore $X$ is structurally unreachable at the breaker event. By Lemma~\ref{lem:reachability-monotone}, $X$ cannot have been structurally reachable earlier. Consequently, the original truncation rule with left-hand side $X$ never gives rise to a live original lineage.

No original truncation rule is therefore counted both among the $\rho$ live original lineages and among the $b$ breakers. Since there are $a$ original truncation rules in total, $\rho+b\le a$.
\end{proof}

We next bound the total number of helper nonterminals.

\begin{lemma}[Helper creation]
\label{lem:helpers}
If $H$ is the total number of new helper nonterminals created by Algorithm~\ref{alg:conversion}, then
\[
   H\le4\rho+2g+2\tau
      \le4(a-b)+2g+2\tau.
\]
\end{lemma}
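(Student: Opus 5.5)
We prove the first inequality by charging every helper nonterminal to the lineage whose processing created it; the second inequality then follows from Lemma~\ref{lem:lineage-breaker-disjoint}, since $\rho\le a-b$ gives $4\rho\le 4(a-b)$. Helpers arise only in the terminal events of Case 2 and Case 3. Cases 0 and 1, canonicalization (Lemma~\ref{lem:canonical}), \textsc{Activate}, and deletion of targets create no new nonterminals. The first step is therefore to check this list against Algorithm~\ref{alg:conversion} line by line.

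For Case 3 we use Lemma~\ref{lem:case3}: a Case-3 processing with $q$ participating referrers creates at most $2q$ new nonterminals. We charge $2$ to each participating lineage. The at most two newly created live boundary truncations are themselves among these $2q$ objects, so counting them again as helpers would double-charge. We therefore count them only as the birth of generated lineages, which are recorded in $g$. Each participating lineage terminates at that event, so it receives a Case-3 charge at most once.

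For Case 2 we split by lineage type, using Lemma~\ref{lem:case2}. An original lineage costs at most $4$. A generated lineage that is still clean when it reaches Case 2 costs at most $2$. A generated tainted lineage costs at most $4 = 2+2$, and such lineages are counted by $\tau$. The residual two-copy referrers processed right after activation are resolved with at most two truncations of the now-reachable base $Y$. An original referrer of this kind costs at most $2\le4$, and a generated one costs at most $2$. Since every lineage terminates at its terminal event, each lineage is charged at most once.

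Summing gives at most $4$ per original live lineage, at most $2$ per generated lineage, and at most $2$ extra per tainted generated lineage ending in Case 2. This yields $H\le 4\rho+2g+2\tau$. The main obstacle is the Case-3 accounting. The $2q$ bound counts per participating referrer, and a generated lineage can participate in Case 3 at its own terminal event. We must verify that the two boundary truncations created there are charged once, inside the $2q$ of the event that creates them, and not again later. Any later cost of such a boundary lineage comes only from its own terminal event, and that event is what the term $2g$ (plus $2\tau$ if it becomes tainted) pays for.

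A second point to check is Case-3 participation by an original lineage: it receives a charge of $2$, which is within the $4$ allotted to it. A lineage that terminates by canonicalization, by activation, or by becoming internal incurs no helper cost, so the bound is not affected.
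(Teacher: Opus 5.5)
Your proposal is correct and follows essentially the same route as the paper. You charge each helper to the unique terminal event of a live lineage (Case~3, Case~2, or a residual two-copy resolution): at most two units per Case-3 participant, four per original lineage, a two-unit baseline per generated lineage, and two extra units only for tainted generated lineages ending in Case~2. You then obtain the second inequality, exactly as the paper does, from $\rho\le a-b$ in Lemma~\ref{lem:lineage-breaker-disjoint}.
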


\begin{proof}
Follow every truncation lineage until it terminates or becomes internal, and charge each newly created helper at the unique local event in which that helper is introduced.

For a Case-3 event with $q$ input lineages and $h_E$ new helpers, Lemma~\ref{lem:case3} gives $h_E\le2q$. Order the inputs arbitrarily and assign at most two of the $h_E$ helpers to each in turn. The total capacity is $2q$, so every Case-3 helper is assigned exactly once and no participating lineage receives more than two units of charge. This bookkeeping applies equally to original and generated inputs.

By definition, exactly $\rho$ original truncation lineages are ever live. Cases 0 and 1 create no helpers. If an original lineage terminates in Case 2, at most four helpers are created for its local decomposition; a residual two-copy resolution or a Case-3 event costs it at most two. Thus reserving four units for each of the $\rho$ live original lineages dominates all possibilities. Original truncation rules whose left-hand side never becomes structurally reachable generate no live original lineage and receive no charge. Hence original lineages contribute at most $4\rho$.

For each of the $g$ generated lineages reserve two baseline units. If such a lineage becomes internal before a terminal rewrite, these units are unused. A Case-3 event or residual two-copy resolution costs it at most two helpers, and by Lemma~\ref{lem:case2} a clean generated lineage terminating in Case 2 also needs at most two.

Only a tainted generated lineage terminating in Case 2 can require more than the two-unit baseline. There are $\tau$ such lineages, each requiring at most four helpers, so two additional units per such lineage suffice. Therefore
\[
   H\le4\rho+2g+2\tau.
\]
Lemma~\ref{lem:lineage-breaker-disjoint} gives $\rho\le a-b$, yielding the second inequality.
\end{proof}

The remaining bounds are direct consequences of the local Case-3 and taint limits.

\begin{lemma}[Source and taint bounds]
\label{lem:source-taint}
The counters satisfy
\[
   g\le2s,\qquad
   \tau\le2b,\qquad
   s\le m-a.
\]
In particular, $g\le2(m-a)$.
\end{lemma}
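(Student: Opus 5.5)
The three inequalities are proved separately, each by charging the relevant object to an original nonterminal and showing that no nonterminal is charged twice.

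\emph{The bound $g\le 2s$.} Generated live lineages are created only at Case-3 events (by the Definition of lineage); Case 0 and Case 1 create none, and Case 2 creates no live lineage by the argument in Lemma~\ref{lem:no-late-arrivals}. Each Case-3 event produces at most two live generated truncations after the event, by Lemma~\ref{lem:case3}. It therefore suffices to show that each original target hosts at most one Case-3 event. This holds because Algorithm~\ref{alg:conversion} processes all crossing referrers of a target $X$ jointly, during the single iteration in which $X$ is handled. By Lemma~\ref{lem:no-late-arrivals}, no live lineage can return to $X$ afterwards, so no second Case-3 event at $X$ is possible. Charging each generated lineage to the target of its creating event, and counting only targets that create at least one live lineage, gives $g\le 2s$.

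\emph{The bound $s\le m-a$.} Case 3 applies only to targets whose original production is a concatenation or a repetition. Atomic targets fall under Case 0, and truncation targets always fall under Case 1. Since $m$ counts all nonterminals and $a$ counts the truncation rules, the number of original concatenation and repetition nonterminals is at most $m-a$. Combining the two bounds gives $g\le 2(m-a)$.

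\emph{The bound $\tau\le 2b$.} I expect this step to be the main one. Every lineage counted by $\tau$ is a generated lineage that is tainted at termination. Because taint is permanent and defined by a first event, each such lineage has a unique \emph{taint event}. By the Taint origin lemma, this event is a Case-1 redirect through an original truncation target $X$. By definition, such an $X$ is a breaker, so charge the lineage to $X$. Lemma~\ref{lem:two-taints} states that during the processing of one original truncation target at most two clean lineages become tainted. Lemma~\ref{lem:no-late-arrivals} ensures that all taint events at $X$ occur within that single processing. Hence each breaker receives at most two charges, and $\tau\le 2b$ follows. Note that we even count every tainted generated lineage, not only those terminating in Case 2, so the bound holds with room to spare.

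The points that need care are that the charge map is well defined and that processing happens only once per target. The map is well defined because the taint event is unique. The once-per-target property rests on the canonicalization step preceding the Case-1 redirects, which is guaranteed by the order of steps in Algorithm~\ref{alg:conversion}, together with the no-late-arrivals guarantee.
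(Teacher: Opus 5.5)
Your proposal is correct and follows the paper's proof essentially line by line. The three bounds come out the same way: $g\le 2s$ from the at-most-two live boundary lineages per Case-3 target (Lemma~\ref{lem:case3}); $\tau\le 2b$ by charging each tainted lineage to the breaker at which it is first tainted and applying Lemma~\ref{lem:two-taints}; and $s\le m-a$ because Case-3 targets have concatenation or repetition productions.

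Your added remark that each target hosts only one Case-3 event (joint processing plus Lemma~\ref{lem:no-late-arrivals}) makes explicit a step the paper leaves implicit. One small correction to your closing paragraph: once-per-target processing comes from the decreasing scan together with the no-late-arrivals lemma, not from canonicalization. Canonicalization is what yields the at-most-two taint bound within that single processing.
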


\begin{proof}
Only Case 3 creates new live generated lineages, and by Lemma~\ref{lem:case3} each original Case-3 target creates at most two. Since $s$ counts exactly the original Case-3 targets that create at least one such lineage, $g\le2s$.

Every lineage counted by $\tau$ must first become tainted at a breaker. By Lemma~\ref{lem:two-taints}, one breaker can taint at most two clean generated lineages; hence $\tau\le2b$.

Finally, every target counted by $s$ has an original concatenation or repetition production. None is an original truncation target, and therefore there are at most $m-a$ possible source targets. Thus $s\le m-a$.
\end{proof}

\begin{lemma}[Deletion credit]
\label{lem:deletion}
If $D$ is the number of original nonterminals deleted during the conversion, then
\[
   D\ge s+b.
\]
\end{lemma}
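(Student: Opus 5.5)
The plan is to exhibit an injection from the set of $s$ Case-3 source targets together with the $b$ breakers into the set of deleted original nonterminals. The natural choice is the identity map: each such target is itself an original nonterminal, and I will show it ends up deleted.

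\textbf{Step 1: the targets are distinct.} A target counted by $s$ has an original concatenation or repetition production. A breaker has an original truncation production. Each nonterminal has exactly one production, so the two sets are disjoint, and the identity map is injective on their union.

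\textbf{Step 2: every Case-3 source target is deleted.} Let $X$ be counted by $s$. Case~3 is applied only when Algorithm~\ref{alg:conversion} processes $X$ as an unreachable target with a nonempty live bucket; if $X$ were reachable, the loop would have skipped it. After the crossing referrers are handled, the algorithm reaches the line that deletes $X$ once its bucket is exhausted. By Lemma~\ref{lem:no-late-arrivals}, no lineage can later return to $X$, so this deletion is final.

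\textbf{Step 3: every breaker is deleted.} For a breaker $X$, a clean generated lineage loses alignment during the Case-1 composition through the original truncation rule of $X$. This again happens only while $X$ is processed as an unreachable target. The algorithm then deletes $X$ after the bucket is exhausted, exactly as in Example~\ref{ex:breaker}.

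\textbf{Step 4: no double counting.} Each original target is processed at most once in the decreasing scan. By Lemma~\ref{lem:reachability-monotone}, an unreachable target that is processed never becomes reachable again in a way that would skip its deletion. Hence each of these $s+b$ distinct nonterminals contributes one deletion, and $D\ge s+b$.

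The main obstacle is the subtle point in Step 2. One must check that processing $X$ always ends with its deletion, and in particular that the Case-3 event does not somehow make $X$ itself reachable first. Two facts settle this. First, \textsc{Activate} only propagates to strict descendants of the processed target. Second, full-length shielding exposes no child of $X$, and a fortiori does not expose $X$. So $X$ stays unreachable throughout its own processing and is deleted.
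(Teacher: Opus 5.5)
Your proof is correct and follows the paper's own argument: the identity injection, disjointness by production type (concatenation/repetition versus truncation), and the fact that each such target is processed while unreachable and deleted once its bucket is exhausted. One small point: in Step~4 the appeal to Lemma~\ref{lem:reachability-monotone} points the wrong way, since that lemma says reachable nonterminals stay reachable, not that unreachable ones stay unreachable. It is not needed anyway, because the deletion is unconditional at the end of $X$'s own processing, and your final paragraph already shows that $X$ stays unreachable throughout that processing.
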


\begin{proof}
Every target counted by $s$ is unreachable when its Case-3 bucket is processed and is deleted after that bucket is resolved. Every breaker is likewise an unreachable original target when processed and is deleted after its bucket is resolved. The two sets are disjoint: source targets counted by $s$ have original concatenation or repetition productions, whereas breakers have original truncation productions. Hence at least $s+b$ distinct original nonterminals are deleted.
\end{proof}

\begin{theorem}[Factor-$5$ internalization]
\label{thm:main}
Let $\C$ be a collage system of size $m$ generating a nonempty string. Algorithm~\ref{alg:conversion} produces an internal collage system $\Cp$ generating the same string and satisfying
\[
   |\Cp|\le5m.
\]
More precisely, if $a$ is the number of original truncation rules and $b$ is the number of breaker targets, then
\[
   |\Cp|\le4m+a-b.
\]
\end{theorem}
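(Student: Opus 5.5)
Correctness and internality of $\Cp$ are already given by the Correctness lemma, so only the size bound remains. The plan is to write the final size as original nonterminals plus helpers minus deletions, $|\Cp|\le m+H-D$, and then substitute the counting lemmas. Every nonterminal of $\Cp$ is either an original nonterminal that was never deleted or a helper created by Algorithm~\ref{alg:conversion}. Canonicalization and the core-only Case-2 subcase create no nonterminal; they only rewrite existing ones. So $|\Cp| = m - D + H$, where $D$ counts deleted original nonterminals. Possibly some helpers become unused, which only helps.

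Next we substitute the bounds. Lemma~\ref{lem:helpers} gives $H\le 4(a-b)+2g+2\tau$. Lemma~\ref{lem:source-taint} gives $g\le 2s$ and $\tau\le 2b$. Lemma~\ref{lem:deletion} gives $D\ge s+b$. Together these yield
\[
 |\Cp|\le m+4a-4b+4s+4b-s-b = m+4a+3s-b .
\]
Using $s\le m-a$ we get $|\Cp|\le m+4a+3(m-a)-b=4m+a-b$, which is the refined bound. Finally $a\le m$ and $b\ge0$ give $|\Cp|\le 5m$.

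The main point to check carefully is the accounting identity $|\Cp|=m-D+H$ together with the deletion credit. Each deleted nonterminal must be counted once, and no helper may be counted as an original. Canonicalization must also not secretly add a nonterminal, and Lemma~\ref{lem:canonical} confirms that it does not. We must also confirm that the credit $D\ge s+b$ really applies to the targets counted by $s$: they are unreachable original concatenation or repetition targets processed in Case 3 and then deleted, which matches the algorithm's final ``delete $X$'' step. The hypothesis that the string is nonempty ensures the start symbol is not a degenerate case. Beyond that, the proof is pure arithmetic on the earlier lemmas.
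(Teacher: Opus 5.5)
Your proposal is correct and follows the paper's proof essentially line for line. You write $|\Cp|=m+H-D$, substitute $H\le 4(a-b)+2g+2\tau$, $g\le 2s$, $\tau\le 2b$ and $D\ge s+b$ to reach $m+4a+3s-b$, and then use $s\le m-a$ and $a\le m$ (with $b\ge 0$) to get $4m+a-b\le 5m$; relying on the Correctness lemma for the correctness and internality part also matches how the paper structures the argument.
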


\begin{proof}
Let $H$ be the number of helper nonterminals created and let $D$ be the number of original nonterminals deleted. Since the conversion starts from $m$ original nonterminals,
\[
   |\Cp|=m+H-D.
\]
By Lemmas~\ref{lem:helpers} and~\ref{lem:deletion},
\[
\begin{aligned}
|\Cp|
&\le m+4(a-b)+2g+2\tau-s-b.
\end{aligned}
\]
Using Lemma~\ref{lem:source-taint}, $g\le2s$ and $\tau\le2b$, and therefore
\[
\begin{aligned}
|\Cp|
&\le m+4(a-b)+4s+4b-s-b\\
&=m+4a+3s-b.
\end{aligned}
\]
Again by Lemma~\ref{lem:source-taint}, $s\le m-a$. Hence
\[
\begin{aligned}
|\Cp|
&\le m+4a+3(m-a)-b\\
&=4m+a-b\\
&\le5m,
\end{aligned}
\]
because $a\le m$.
\end{proof}

\begin{corollary}
For every nonempty string $T$, $\widehat c(T)\le5c(T)$.
\end{corollary}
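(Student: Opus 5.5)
The corollary follows by applying Theorem~\ref{thm:main} to an optimal general collage system. I fix a nonempty string $T$ and choose a collage system $\C$ generating $T$ with $|\C|=c(T)$; such a minimizer exists because sizes are nonnegative integers and at least one collage system generating $T$ exists (for instance, an SLP built from atomic and concatenation rules).

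Theorem~\ref{thm:main} applied to $\C$, which has size $m=c(T)$ and generates the nonempty string $T$, yields an internal collage system $\Cp$ generating the same string $T$ with $|\Cp|\le5m$. By the definition of $\widehat c(T)$ as a minimum over all internal collage systems generating $T$, we get
\[
   \widehat c(T)\le|\Cp|\le5\,c(T).
\]

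The only points to check are routine model-compatibility issues. First, the output $\Cp$ must be a legal collage system in the same model used to define $\widehat c$: no unary productions, and repetition exponents $r>2$. This is guaranteed by the constructions in Section~\ref{sec:conversion}; for example, the core-only Case-2 subcase rewrites the referrer as $Q\to Y^k$ with $k\ge3$, and a core with $k=2$ becomes a concatenation. Second, the start symbol must survive, which holds because it is reachable and reachable original nonterminals are never deleted. The Correctness lemma already certifies both generation of $T$ and internality, so I expect no genuine obstacle: the argument is a direct instantiation of the theorem at a minimizer.
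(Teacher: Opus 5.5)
Your proposal is correct and is exactly how the paper obtains the corollary: apply Theorem~\ref{thm:main} to a minimum-size collage system $\C$ for $T$ (so $m=c(T)$), and use $\widehat c(T)\le|\Cp|\le 5c(T)$. Your extra remarks on the existence of a minimizer and on the legality of the output productions are consistent with the paper's constructions.
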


\begin{remark}
The inequality $\rho+b\le a$ is the only point where monotonicity of structural reachability enters the counting itself: breaker targets do not consume the four-helper budget reserved for live original truncation lineages.
\end{remark}

\section{Running time}
\label{sec:time}

We use a unit-cost RAM model in which every numerical quantity manipulated by the conversion---rule indices, repetition exponents, substring endpoints, and represented lengths---fits in $O(1)$ machine words, and arithmetic and comparisons on those words take constant time. The word size is assumed large enough for the largest value encountered during the conversion; it is not assumed to be merely $\Theta(\log |T|)$.

\begin{theorem}
Algorithm~\ref{alg:conversion} runs in deterministic $O(m^2)$ worst-case time in the unit-cost RAM model above.
\end{theorem}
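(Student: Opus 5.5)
The plan is to split the running time into two parts: work that is proportional to the number of objects ever created or touched, and work spent on Case-1 redirects, which is bounded by the length of the descending chains of targets. I will keep the data structures explicit: an array of buckets indexed by original nonterminals in the fixed topological order, a reachability bit per original nonterminal, the stored length $|\val{X}|$ of every nonterminal, and per-bucket pointers to the current longest clean prefix and the current longest clean suffix. All lengths can be computed once, bottom-up, in $O(m)$ time under the unit-cost model. After that, each local rewrite in Cases 0, 2 and 3, and each single Case-1 composition step, costs $O(1)$ arithmetic on endpoints, since it only consults the original production of the current target.

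\textbf{Step 1: the number of lineages is linear.} I will bound the number of lineages ever created by $O(m)$.
\begin{itemize}
\item Original lineages number at most $\rho\le a\le m$.
\item Generated lineages number $g\le 2(m-a)$ by Lemma~\ref{lem:source-taint}.
\item By Theorem~\ref{thm:main}, at most $4m$ helpers are ever created, so the total number of nonterminals ever present is $O(m)$.
\end{itemize}

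\textbf{Step 2: the redirect chains are short.} By Lemma~\ref{lem:no-late-arrivals}, every Case-1 redirect moves a lineage to a strictly smaller original target. Hence each lineage undergoes at most $m$ redirects over its lifetime, and the total Case-1 work is $O(m)\cdot m=O(m^2)$. Canonicalization at a target $X$ scans the bucket once. It uses the nesting property of Lemma~\ref{lem:canonical} to find the maximum-length prefix and suffix, and rewrites the other clean entries in $O(1)$ each. Every rewrite terminates a lineage, so canonicalization costs $O(m)$ in total. The cost of a Case-3 or Case-2 batch is linear in its bucket size, and each lineage is removed from a bucket at most once, so these batches also cost $O(m)$ overall. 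The scan over the original targets adds $O(m)$.

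\textbf{Step 3: activation and deletion.} Reachability is monotone by Lemma~\ref{lem:reachability-monotone}, so I will implement \textsc{Activate} as a graph search that never revisits a marked nonterminal. Its total cost over the whole run is then $O(m)$, plus the cost of clearing buckets. Each cleared entry is a lineage that terminates, so clearing is paid for by Step 1. Deleting an unreachable target takes $O(1)$. Its incoming edges need not be scanned, because Lemma~\ref{lem:bucket-complete} guarantees that no live reference to it remains.

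\textbf{Expected main obstacle.} The delicate point is making each redirect genuinely $O(1)$ while keeping the bucket bookkeeping consistent. This includes updating the canonical prefix and suffix pointers when a redirected clean lineage enters a new bucket, and deciding cleanliness by checking whether the new interval starts at position $1$ or ends at the stored length. I will handle this with doubly linked bucket lists plus the two max-pointers, updated in $O(1)$ on every insertion. If needed, I will instead recompute them in the single canonicalization scan done when the target is processed, which Lemma~\ref{lem:no-late-arrivals} makes safe because no lineage arrives later. Summing the three steps gives the deterministic $O(m^2)$ bound.
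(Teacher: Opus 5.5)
Your proposal is correct and takes essentially the same route as the paper: $O(m)$ lineages overall, at most $O(m)$ Case-1 redirects per lineage by strict descent in the fixed topological order, hence $O(m^2)$ lineage--target incidences to which canonicalization, Case-2/3 batches, and bucket moves are charged, plus mark-once incremental reachability. One small correction: a scan-based canonicalization is not $O(m)$ in total, because each scan also passes over the tainted and original lineages that survive it. Its cost is still bounded by the bucket size and hence by the $O(m^2)$ incidences, exactly as the paper notes, so the final bound is unaffected (and the $4m$ helper count comes from Lemmas~\ref{lem:helpers} and~\ref{lem:source-taint}, not directly from Theorem~\ref{thm:main}).
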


\begin{proof}
There are $\rho\le a$ original truncation lineages that are ever live and, by Lemma~\ref{lem:source-taint}, at most $g\le2(m-a)$ generated live lineages. Thus only $O(m)$ live lineages ever exist over the whole execution.

Maintain one bucket per original target. A lineage is stored only in the bucket of its current target. A Case-1 redirect moves it to a strictly smaller original target in the fixed topological order. Therefore one lineage undergoes at most $O(m)$ redirects, and the total number of lineage--target incidences is $O(m^2)$.

Canonicalization can be maintained incrementally by storing, for each bucket, the longest clean prefix and longest clean suffix currently known. Each clean arrival requires only constant-time endpoint and length comparisons. Alternatively, scanning the bucket when it is opened still charges work only to lineage--target incidences.

Structural reachability is maintained incrementally with a Boolean mark and a queue or stack. Each original structural edge is propagated only when its tail first becomes reachable. Empty-bucket tests and deletions are constant-time bookkeeping operations.

Cases 0--3 create only $O(1)$ local helpers per participating referrer, and Case 3 finds its longest required suffix and prefix by scanning the current bucket once. Therefore all work is dominated by the $O(m^2)$ lineage--target incidences, and the total deterministic worst-case running time is $O(m^2)$.
\end{proof}

\begin{remark}[Bit complexity]
If binary arithmetic costs are charged explicitly, let $\ell$ be the maximum bit length of any exponent, truncation endpoint, or represented length encountered during the conversion. The algorithm performs $O(m^2)$ unit-cost RAM operations, but its bit complexity includes the cost of arithmetic and comparison on $\ell$-bit integers. No $O(m^2)$ bit-complexity claim is made.
\end{remark}

\section{Discussion and conclusion}

Theorem~\ref{thm:main} improves the previously explicit conversion guarantee from factor $9$ to factor $5$ while retaining deterministic quadratic running time. The proof also gives the refined estimate $|\Cp|\le4m+a-b$, where the breaker count $b$ records an execution-dependent saving that is invisible in a purely rule-type-based count.

The analysis still contains deliberate slack. Every live original truncation lineage is allowed the full four-helper Case-2 budget, and every generated lineage receives a two-helper baseline even if it later becomes internal without creating helpers. It is therefore natural to ask whether the universal conversion constant can be reduced further. The present result shows that a substantial improvement already follows from local structural information and from distinguishing hidden breaker targets from active original referrers.

\section*{Acknowledgements}
The author gratefully acknowledges the Gruppo Nazionale per il Calcolo Scientifico (GNCS) of the Istituto Nazionale di Alta Matematica (INdAM) for its support.

\section*{Declaration on the Use of AI Tools}
Generative AI tools, including OpenAI's ChatGPT, were used during the preparation of this manuscript to assist with language refinement, organization of the exposition, bibliographic searches, and exploratory checks of intermediate arguments. All mathematical statements, proofs, citations, and final conclusions were critically reviewed by the author, who assumes full responsibility for the content of the manuscript.

\end{document}